\let\@secnumfont\bfseries
\def\section{\@startsection{section}{1}%
  \z@{4\linespacing\@plus\linespacing}{\linespacing}%
  {\bfseries\centering}}
\def\introsection{\@startsection{section}{1}%
  \z@{3\linespacing\@plus\linespacing}{\linespacing}%
  {\bfseries\centering}}
\def\subsection{\@startsection{subsection}{2}%
   \z@{1.25\linespacing\@plus.7\linespacing}{.5\linespacing}%
   {\normalfont\bfseries}}
\def\subsectionsinline{\def\subsection{\@startsection{subsection}{2}%
  \z@{1\linespacing\@plus.7\linespacing}{-.5em}%
  {\normalfont\bfseries}}}
\theoremstyle{definition}
\newtheorem*{definition*}{Definition}
\newtheorem*{example*}{Example}
\newtheorem*{problem*}{Problem}
\newtheorem*{exercise*}{Exercise}
\newtheorem*{question*}{\color{blue}Question}
\newtheorem*{construction*}{Construction}
\theoremstyle{remark}
\newtheorem{example}[equation]{Example}
\newtheorem{remark}[equation]{Remark}
\newtheorem*{note*}{Note}
\newtheorem*{notation*}{Notation}
\newtheorem*{remark*}{Remark}
\newtheorem*{data*}{Data}
\theoremstyle{plain}
\newtheorem*{theorem*}{Theorem}
\newtheorem*{corollary*}{Corollary}
\newtheorem*{lemma*}{Lemma}
\newtheorem*{proposition*}{Proposition}
\newtheorem*{conjecture*}{Conjecture}
\newtheorem*{claim*}{Claim}
\newtheorem*{proposal*}{Proposal}
\newtheorem*{conclusion*}{Conclusion}
\newtheorem*{hypothesis*}{Hypothesis}
\newtheorem*{assumption*}{Assumption}
\newenvironment{proof*}[1][\proofname]{
  \begin{proof}[#1]}{  
\end{proof}}
\numberwithin{equation}{subsection}
\definecolor{refkey}{rgb}{0,.6,.4}
\newcommand{\CC}{{\mathbb C}}
\newcommand{\EE}{\mathbb E}
\DeclareMathOperator{\pt}{pt}
\newcommand{\RP}{{\mathbb R\mathbb P}}
\newcommand{\RR}{{\mathbb R}}
\DeclareMathOperator{\Spin}{Spin}
\newcommand{\ZZ}{{\mathbb Z}}
\newcommand{\chiup}{\raise.5ex\hbox{$\chi$}}
\newcommand{\cir}{S^1}
\newcommand{\inv}{^{-1}}
\DeclareRobustCommand{\mstrut}{^{\vphantom{1*\prime y\vee M}}}
\newcommand{\temsquare}{\raise3.5pt\hbox{\boxed{ }}}
\newcommand{\zmod}[1]{\ZZ/#1\ZZ}
\newcommand{\zt}{\zmod2}
\numberwithin{subsection}{section}
\renewcommand{\cir}{\ensuremath{S^1}}
\DeclareMathOperator{\Arf}{Arf}
\DeclareMathOperator{\Cliff}{Cliff}
\DeclareMathOperator{\Thom}{Thom}
\newcommand{\EG}{E}
\newcommand{\IZ}{I\ZZ}
\newcommand{\MSpin}{M\!\Spin}
\newcommand{\bU}{\overline{U}}
\newcommand{\bY}{\overline{Y}}
\newcommand{\bdY}{\partial Y}
\newcommand{\gpd}{/\!/}  
\newcommand{\tE}{\widetilde{E}}
\DeclareMathOperator{\spin}{Spin}
\DeclareMathOperator{\mspin}{MSpin}
\DeclareMathOperator{\mtspin}{MTSpin}
\DeclareMathOperator{\thom}{Thom}
\DeclareMathOperator{\Lie}{Lie}
\newcommand{\E}{\mathbb{E}}
\newcommand{\rp}{\mathbf{RP}}
\newcommand{\rnstar}[1]{{\R^{\1}}^{\ast}}
\newcommand{\slot}{\,-\,}
\newcounter{textItem}
\newcommand{\textItemref}[1]
 	 {{\rm (\ref{#1})}}
\newenvironment{textList}{\begin{list}%
{\rm (\Alph{textItem})}{\usecounter{textItem}
\setlength{\labelwidth}{2em}
\setlength{\itemindent}{2em}
\setlength{\leftmargin}{0pt}
\setlength{\listparindent}{0pt}
\setlength{\parsep}{0pt}
\setlength{\partopsep}{0pt}
\setlength{\itemsep}{\medskipamount}
\setlength{\topsep}{\medskipamount}
}}{\end{list}}
	\theoremstyle{remark}
\newcommand{\Z}{{\mathbb Z}}
\newcommand{\R}{{\mathbb R}}
\newtheorem{prop}[equation]{Proposition}
\newtheorem{clm}[equation]{Claim}
\newtheorem*{thm*}{Theorem}
\newtheorem*{cor*}{Corollary}
\newtheorem*{lem*}{Lemma}
\newtheorem*{prop*}{Proposition}
\begin{document}

\abovedisplayskip18pt plus4.5pt minus9pt
\belowdisplayskip \abovedisplayskip
\abovedisplayshortskip0pt plus4.5pt
\belowdisplayshortskip10.5pt plus4.5pt minus6pt
\baselineskip=15 truept
\marginparwidth=55pt

\makeatletter
\renewcommand{\tocsection}[3]{%
  \indentlabel{\@ifempty{#2}{\hskip1.5em}{\ignorespaces#1 #2.\;\;}}#3}
\renewcommand{\tocsubsection}[3]{%
  \indentlabel{\@ifempty{#2}{\hskip 2.5em}{\hskip 2.5em\ignorespaces#1%
    #2.\;\;}}#3} 
\renewcommand{\tocsubsubsection}[3]{%
  \indentlabel{\@ifempty{#2}{\hskip 2.5em}{\hskip 4.7em\ignorespaces#1%
    #2.\;\;}}#3} 
\makeatother

\setcounter{tocdepth}{3}

\renewcommand{\thesubsection}{\bf\arabic{subsection}}
\renewcommand{\theequation}{\arabic{subsection}.\arabic{equation}}

\theoremstyle{definition}
\newtheorem{ansatz}[equation]{Ansatz}



 \title[Invertible Phases with Spatial Symmetry]{Invertible Phases of Matter with Spatial Symmetry} 
 \author[D. S. Freed]{Daniel S.~Freed}
 \address{Department of Mathematics \\ University of Texas \\ Austin, TX
78712} 
 \email{dafr@math.utexas.edu}
 \author[M. J. Hopkins]{Michael J.~Hopkins}
 \address{Department of Mathematics \\ Harvard University \\ Cambridge, MA
02138} 
 \email{mjh@math.harvard.edu}
 \thanks{This material is based upon work supported by the National Science
Foundation under Grant Numbers DMS-1158983, DMS-1160461, DMS-1510417, and
DMS-1611957.  Any opinions, findings, and conclusions or recommendations
expressed in this material are those of the authors and do not necessarily
reflect the views of the National Science Foundation.}
 \date{\today}
 \begin{abstract} 
 We propose a general formula for the group of invertible topological phases
on a space~$Y$, possibly equipped with the action of a group~$G$.  Our
formula applies to arbitrary symmetry types.  When $Y$~is Euclidean space and
$G$~a crystallographic group, the term `topological crystalline phases' is
sometimes used for these phases of matter.
 \end{abstract}
\maketitle


In previous work~\cite{FH}, recalled in~\S\ref{subsec:1.1} below, we
determine the homotopy type of the space of invertible field theories with a
fixed symmetry type.  This result is a theorem about \emph{field theories} in
the framework of the Axiom System for field theory introduced by Segal in the
1980's.  It has wide applicability: invertible field theories enter quantum
field theory and string theory in many different ways.  In condensed matter
theory our theorem can be used to classify invertible phases of matter (on
Euclidean space), but only accepting standard unproved assertions about
effective low energy field theories of discrete models.  In this note we
combine this theorem with a few more basic principles~(\S\ref{subsec:1.2}) to
offer a general formula for the abelian group of invertible topological
phases of matter on a topological space~$Y$ equipped with the action of a
group~$G$. \emph{Time} does not appear: $Y$~models \emph{space}, not
spacetime.  We motivate and present the formula in Ansatz~\ref{thm:1} and
Ansatz~\ref{thm:3}; the formula depends on a symmetry type but not on a
dimension.  As evidence we compute some illustrative examples and compare to
known results.  (See Example~\ref{thm:7} and Example~\ref{thm:10}.)  The
pedagogical aspirations of this note are realized in~\S\ref{subsec:1.7},
where we briefly explain some computational techniques in Borel equivariant
homotopy theory, and in~\S\ref{subsec:1.8}, where we illustrate via a
specific example---a half-turn in 3-space---which we attack using three
different methods.
 
The idea that invertible phases comprise a generalized \emph{homology} group
on space was suggested by Alexei Kitaev; he works with lattice models to
motivate the particular homology theory.  There are discussions of special
cases of the problem we treat here in~\cite{SHFH,TE,HSHH}.  The recent
paper~\cite{SXG} uses a spectral sequence to compute the group of phases, as
do we in~\S\ref{subsubsec:1.7.2}, \S\ref{subsubsec:1.8.3}, but the
generalized homology theory is not specified and physical arguments are used
to compute differentials.  We thank Lukasz Fidkowski, Mike Hermele, and
Ashvin Vishwanath for bringing the specific example treated
in~\S\ref{subsec:1.8} and the general problem to our attention, as well as for
a very informative email correspondence.

{\small
\def\reftext{References}
\renewcommand{\tocsection}[3]{%
  \begingroup 
   \def\tmp{#3}%
   \ifx\tmp\reftext
  \indentlabel{\qquad\quad\;\, } #3%
  \else\indentlabel{\ignorespaces#1 #2.\;\;}#3%
  \fi\endgroup}
\tableofcontents
}

  \subsection{Recollection of~\cite{FH}}\label{subsec:1.1}

Let $d$~be the dimension of space.  The \emph{symmetry type} of a
Wick-rotated relativistic field theory in spacetime dimension~$d+1$ is
described by a pair~$(H,\rho )$.  The topological group~$H$ is the colimit of
a sequence of compact Lie groups~$H_{d+1}$, each sitting in a group extension
  \begin{equation}\label{eq:1}
     1\longrightarrow K\longrightarrow H_{d+1}\xrightarrow{\;\;\rho
     _{d+1}\;\;} O_{d+1} 
  \end{equation}
in which the image of~$\rho _{d+1}$ is either~$O_{d+1}$ (symmetry type with
time-reversal) or~$SO_{d+1}$ (no time-reversal).  Then $\rho \:H\to O$ is the
stabilization of~$\rho _{d+1}$ as~$d\to\infty $; see~\cite[\S2]{FH}.  The
subgroup~$K$ is the group of internal symmetries---those which act trivially
on \emph{spacetime}---and is independent of~$d$.  (If we break relativistic
invariance, there is a slightly larger group which acts trivially on
\emph{space}; see~\cite[Remark~9.32]{FH}.)  The homomorphism~$\rho $
determines a rank zero virtual real vector bundle $W\to BH$, the
stabilization of rank zero virtual bundles over~$BH_{d+1}$, and there is a
corresponding Thom spectrum
  \begin{equation}\label{eq:2}
     MTH = \Thom(BH;-W) 
  \end{equation}
of the virtual vector bundle $-W\to BH$.  Let $I\ZZ$~be the Anderson dual
to the sphere spectrum and
  \begin{equation}\label{eq:3}
     E = E_{(H,\rho )} = \Sigma ^2\IZ^{MTH} 
  \end{equation}
the spectrum of maps $MTH\to\Sigma ^2\IZ$.  Then the main
outcome\footnote{This statement is left as a conjecture in that paper; what
is proved from various ans\"atze is an identification of the \emph{torsion}
subgroup with isomorphism classes of invertible \emph{topological} theories.
The entire group~\eqref{eq:5} is also proved to be the group of isomorphism
classes of ``continuous'' invertible theories; see \cite[\S5.4]{FH}.}
of~\cite{FH} is an identification of
  \begin{equation}\label{eq:5}
     E_{-d}(\pt)\cong E^d(\pt) \cong [MTH,\Sigma ^{d+2}\IZ]
  \end{equation}
as the group of deformation classes of invertible reflection positive
extended field theories in~$d+1$ dimensions with symmetry type~$(H,\rho )$.
Computations for various~$(H,\rho )$ may be found in \cite[\S\S9--10]{FH} as
well as \cite{Ka,KTTW,C,BC,GPW}.

  \subsection{Invertible phases on a space}\label{subsec:1.2}
 
We imagine that invertible topological phases can be localized in space,
possibly with noncompact support; satisfy some locality properties; and are
equipped with a pushforward under proper continuous maps.  Since
$E_0(\pt)$~is the group of invertible phases in $0+1$ dimensions---that is,
phases on a point---we posit the following.

  \begin{ansatz}[]\label{thm:1}
 Let $Y$~be a locally compact topological space.  Then the group of
invertible topological phases on~$Y$ of symmetry type~$(H,\rho )$ is the
Borel-Moore homology group~$E\mstrut _{0,BM}(Y)$.
  \end{ansatz}

\noindent
 If $Y$~is the complement in a finite CW complex~$\bY$ of a
subcomplex~$Y_0\subset \bY$, then Borel-Moore homology reduces to relative
homology: $E\mstrut _{0,BM}(Y)\cong E_0(\bY,Y_0)$.  Thus on Euclidean $d$-space we
have
  \begin{equation}\label{eq:4}
     E\mstrut _{0,BM}(\EE^d)\cong E_0(S^d,\pt)\cong E_{-d}(\pt),
  \end{equation}
which recovers~\eqref{eq:5}.  If $Y$~is compact, then $E\mstrut
_{0,BM}(Y)\cong E_0(Y)$.

  \begin{example}[Phases on a torus]\label{thm:7}
 Let $Y=(\cir)^{\times d}$ be the $d$-dimensional torus.  After suspension
$Y$~is homotopy equivalent to a wedge of spheres, from which 
  \begin{equation}\label{eq:6}
     E_0(Y)\cong \bigoplus\limits_{i=0}^d \;E_{-i}(\pt)^{\oplus {d\choose i}}\;. 
  \end{equation}
For example, if $d=2$ and we consider fermionic theories ($H=\Spin$), then 
  \begin{equation}\label{eq:7}
     E_0(S^1\times \cir)\cong (\zt)\;\oplus\; (\zt\oplus \zt)\;\oplus\; (\ZZ);
  \end{equation}
the summands correspond to theories supported on a point, on the 1-cells
(figure eight), and on the 2-cell, respectively.  We remark that all classes
are represented by free fermions: \eqref{eq:7}~is also isomorphic
to~$KO^0(S^1\times \cir)$.  See~\cite{R} for a discussion of the physics of
this example.
  \end{example}

  \begin{remark}[Invertible phases on a compact smooth manifold]\label{thm:8}
 A compact smooth $d$-manifold~$Y$ with boundary has a Spanier-Whitehead
dual~$D(Y/\bdY)\simeq \Thom(Y;-TY)\simeq \Sigma ^{-TY}Y$, according
to~\cite{A}, and so
  \begin{equation}\label{eq:8}
    \begin{aligned}
     E_0(Y,\bdY)&\cong [S^0\,,\,E\wedge Y/\bdY] \\&\cong [MTH\,,\,\Sigma
     ^2\IZ\wedge Y/\bdY]\\&\cong 
     [\Sigma ^{\underline{\RR^d}-TY}(Y)\wedge MTH\,,\,\Sigma ^{d+2}\IZ]\\&\cong
     [\Sigma ^{\underline{\RR^d}-TY}(Y)\,,\,\Sigma ^dE]\,,\, \end{aligned}
  \end{equation}
where $\underline{\RR^d}\to Y$ is the trivial vector bundle with
fiber~$\RR^d$.  This last group is a twisted $E$-cohomology group of~$Y$; the
twisting is trivialized by an $E$-orientation of~$Y$.   
 
The third line of~\eqref{eq:8} may be regarded as deformation classes of
invertible field theories of symmetry type~$(H,\rho )$ with a background
scalar field valued in~$Y$, or rather in a twist of~$Y$ if $Y$~is not
$E$-oriented.  This field theory interpretation was used in~\cite{TE} to
study special cases.
  \end{remark}

  \subsection{Invertible phases on a $G$-space}\label{subsec:1.4}

It is natural consider a compact Lie group $G$ is acting on a locally compact
space $Y$ and model {\em equivariant phases} on $Y$.\footnote{We allow
noncompact groups acting with compact isotropy subgroups, i.e., topological
stacks with compact Lie group stabilizers~\cite[A.2.2]{FHT}.
Example~\ref{thm:7} is of this type: $(\cir)^{\times d}$~is isomorphic to the
quotient stack $\EE^d\gpd \ZZ^d$.}  For this there is a choice to make and so
far simply working with {\em Borel equivariant} homotopy theory seems to
work.  We therefore work in the category of Borel $G$-equivariant spectra.
See~\cite[\S6]{FH} for an introduction and for notation explanation.  We
write $[\slot,\slot]^{hG}$ for the abelian group of homotopy classes of Borel
equivariant maps between $G$-spectra.  

As evidence in favor of Borel equivariant spectra, consider the case when
$Y^d$~is a closed manifold and $G$~acts trivially on~$Y$.  Interpret the last
line of~\eqref{eq:8} as twisted $E$-cohomology; replace $E$-cohomology by
Borel equivariant $E$-cohomology; use the fact that the Borel $G$-equivariant
cohomology of~$Y$ is the nonequivariant $E$-cohomology of the Borel
construction $EG\times _GY$; then since $G$~acts trivially on~$Y$, the Borel
construction reduces to $EG\times _GY\cong BG\times Y$; hence the Borel
equivariant version of~\eqref{eq:8} is
  \begin{equation}\label{eq:9}
     [\Sigma ^{\underline{\RR^d}-TY}(Y)\wedge MTH\wedge BG_+\,,\,\Sigma ^{d+2}\IZ]
     \cong \tE_0(Y), 
  \end{equation}
where $\tE$~is the spectrum~\eqref{eq:3} for the symmetry type $(H\times
G,\rho \times e)$ obtained from~$(H,\rho )$ by taking the Cartesian product
with~ $G$ as an internal symmetry.  This is the expected answer.  

Denote the Borel equivariant homology of a $G$-space~$Y$ as
  \begin{equation}\label{eq:18}
     E^{hG} _0(Y) = [S^0,E\wedge Y_+]^{hG},
  \end{equation}
where on the right hand side $E$~is regarded as a $G$-spectrum with trivial
$G$-action. 

  \begin{ansatz}[]\label{thm:3}
 Let $Y$~be a locally compact topological space equipped with the action of a
compact Lie group~$G$.  Then the group of invertible topological phases
on~$Y$ of symmetry type~$(H,\rho )$ is the Borel-Moore equivariant
\emph{homology} group $E^{hG}_{0,BM}(Y)$.
  \end{ansatz}

  \begin{remark}[]\label{thm:9}
 Whereas Borel equivariant $E$-cohomology is the $E$-cohomology of the Borel
construction, Borel equivariant $E$-homology~\eqref{eq:18} is \emph{not} the
$E$-homology of the Borel construction.
  \end{remark}

  \begin{example}[Euclidean symmetries with a fixed point]\label{thm:10}
 Suppose $Y=\EE^d$ and $G$~is a group of isometries which fixes a point~$p\in
\EE^d$.  Use $p$~as a basepoint to identify the affine space~$\EE^d$ with the
vector space~ $\RR^d$; then the action is described by a homomorphism
$\lambda \:G\to O_d$.  Let $S^\lambda $ denote the associated representation
sphere: the one point compactification of~$\RR^d$ with basepoint the new
point at infinity and inherited $G$-action.  Then Ansatz~\ref{thm:3} computes
the group of invertible phases:
  \begin{equation}\label{eq:10}
     \begin{aligned} E^{hG}_0(S^d\,,\,\infty )&\cong [S^0\,,\,\EG\wedge
     S^\lambda ]^G 
      \\ &\cong [S^{-\lambda }\,,\,\EG]^G \\ &\cong [\Sigma ^{d-\lambda }(BG)
      \wedge MTH\,,\,\Sigma ^{d+2}\IZ] \\ &\cong
      \bigl[\Thom(BH\times
     BG;-W+\underline{\RR^d}-V_\lambda )\,,\,\Sigma ^{d+2}\IZ
     \bigr],\end{aligned}  
  \end{equation}
where $V_\lambda \to BG$ is associated to~$\lambda $.  (The
isomorphism~\eqref{eq:10} is a special case of~\eqref{eq:8}.)  The last
expression in~\eqref{eq:10} is the group of invertible phases in $d$~space
dimensions of the symmetry type $(H\times G,\rho \times \lambda )$.  For
$H=SO$ (bosonic theories) this reduces to the ``crystalline equivalence
principle'' of~\cite{TE} in dimensions~$d\le 1$ for which we can
replace~$MSO$ by~$H\ZZ$.  (Note that \eqref{eq:10} includes a twist for
symmetries which reverse orientation.)
  \end{example}

  \subsection{Computational techniques in Borel equivariant theory}\label{subsec:1.7}

We offer a brief exposition of computational methods, relying on
\cite[\S6]{FH} and the references therein for background on equivariant
stable homotopy theory.

   \subsubsection{Reduction to nonequivariant computations}\label{subsubsec:1.7.1}
 The evaluation of the Borel equivariant maps between $G$-spectra can often be
reduced to the computation of non-equivariant maps by the following devices.

\begin{textList}
\item When $M$ is a $G$-spectrum and $N$ is an ordinary spectrum,
regarded as a $G$-spectrum with trivial action one has  
  \begin{equation}\label{eq:21}
     [M,N]^{hG}=[EG_{+}\underset{G}{\wedge} M,N]. 
  \end{equation}

\item\label{item:1} ({\em Adams isomorphism}).  When $M$ has trivial
$G$-action, $N$~is a $G$-spectrum, and $T$ is a finite free $G$-CW-complex,
the transfer map  
  \begin{equation}\label{eq:20}
     [M, (N\wedge T_{+}\wedge S^{\mathfrak{g}})_{hG}] \to [M,N\wedge
     T_{+}]^{hG} 
  \end{equation}
is an isomorphism.  Here $S^{\mathfrak{g}}$ is the
one point compactification of the Lie algebra of $G$ and  
  \[
     (N\wedge S^{\mathfrak{g}})_{hG} = EG_{+}\underset{G}{\wedge}(N\wedge
     S^{\mathfrak{g}}). 
  \]

\item Atiyah duality identifies the Spanier-Whitehead dual of a closed
manifold $M$ with the Thom complex $M^{-TM}$.  When $W\subset G$ is a
closed subgroup this implies that the Spanier-Whitehead dual of the
homogeneous space $G/W$ is the Thom spectrum
$G_{+}\underset{W}{\wedge}S^{-\mathfrak{g}/\mathfrak{w}}$, in which
$\mathfrak{g}=\Lie G$ and $\mathfrak{w}=\Lie W$.

\item When $W\subset G$ is a closed subgroup one has an isomorphism
\[
[M\wedge G_{+}/W,N]^{hG} = [M,N]^{hW} 
\]
from which, using Atiyah duality, one deduces an isomorphism
\[
[M,N\wedge G/W_{+}]^{hG} = [M,N\wedge S^{\mathfrak{g}/\mathfrak{w}}]^{hW}.
\]
\end{textList}

\begin{remark}
\label{rem:2} In~\textItemref{item:1} when $N$ is the suspension
spectrum of a $G$-space $X$ then $(N\wedge S^{\mathfrak{g}})_{hG}$ is the
suspension spectrum of the Thom complex
\[
\thom(EG\underset{G}{\times}X,\mathfrak{g}).  
\]
\end{remark}


Computations in Borel equivariant homotopy theory can be made using
the above rules, augmented with knowledge of the effect of the maps
\begin{align}
\label{eq:m11}
[M,N\wedge G/(W_{1})_{+}]^{hG} &\to [M,N\wedge G/(W_{2})_{+}]^{hG} \\ 
\label{eq:m12}
[M\wedge G/(W_{2})_{+},N]^{hG} &\to [M\wedge G/(W_{1})_{+},N]^{hG} 
\end{align}
induced by an equivariant map 
\[
G/W_{1}\to G/W_{2}.
\]

\begin{remark}
\label{rem:5}
In the extended example in~\S\ref{subsec:1.8}, the group $G$ is cyclic of
order $2$ and the only map whose effect need be worked out is
\[
G\to G/G.
\]
When $M$ and $N$ have trivial $G$-action, the maps~\eqref{eq:m11}
and~\eqref{eq:m12} are identified, using the rules above, with the maps 
\begin{align*}
[M, N] &\to [M\wedge BG_{+},N] \\ 
[M\wedge BG_{+}, N] &\to [M,N] \\ 
\end{align*}
induced by the transfer map $BG_{+}\to S^{0}$ and the
map $S^{0}\to BG_{+}$ associated to a choice of point in $BG$.
\end{remark}

   \subsubsection{Equivariant Atiyah-Hirzebruch spectral sequence}\label{subsubsec:1.7.2}
To motivate the construction assume $G$~is a finite group and $Z$~a pointed
$G$-space.  Let $L'\subset G$ be a subgroup and suppose $f\:G/L'\times
S^{p-1} \to Z$ is a continuous $G$-equivariant map for some positive
integer~$p$.  The mapping cone of~$f$ is the union $W=Z\cup_f(G/L'\times
D^p)$ which attaches an equivariant $p$-cell to the space~$Z$.  From the
equivariant cofibration sequence
  \begin{equation}\label{eq:15}
     Z\longrightarrow W\longrightarrow W/Z\simeq G/L'\times (D^p,S^{p-1})
  \end{equation}
we obtain a boundary map in equivariant homology: 
  \begin{equation}\label{eq:16}
     \partial \:E^{hG}_k(W,Z)\longrightarrow E^{hG}_{k-1}(Z) .
  \end{equation}
By excision and~\eqref{eq:21} the domain is isomorphic to $E^{p-k}(BL')$,
which by~\eqref{eq:5} is interpreted as a group of topological phases in
spatial dimension~$p-k$.  If $E=E_{(H,\rho )}$ as in~\eqref{eq:3}, then these
theories have symmetry type $(H\times L',\rho \times e)$.  Suppose $Z$~is
obtained from a subcomplex~$Z'\subset Z$ by attaching an equivariant
$(p-1)$-cell $G/L\times D^{p-1}$, and compose~\eqref{eq:16} with the quotient
map
  \begin{equation}\label{eq:17}
     E^{hG}_{k-1}(Z)\longrightarrow E^{hG}_{k-1}(Z,Z')\cong
     E^{hG}_{k-1}(G/L\times S^{p-1})\cong E^{p-k}(BL). 
  \end{equation}
If the composite is nonzero, which means the boundary of the $p$-cell
attached in~\eqref{eq:15} intersects the $(p-1)$-cell in~\eqref{eq:17}, then
since the stabilizer subgroup can only increase by taking the boundary, we
must have $L'\subset L$.  The composite $E^{p-k}(BL')\to E^{p-k}(BL)$ is the
\emph{transfer}, the pushforward along the finite cover $BL'\to BL$ with
fiber~$L'/L$.   
 
The Atiyah-Hirzebruch spectral sequence is obtained by filtering a $G$-CW
complex by its skeleta and systematizing the argument above.  Suppose $Y$ is
the complement of a subcomplex $Y_0\subset \bY$ of a finite $G$-CW complex.
Then the $E^1$-page of the spectral sequence is the \emph{Bredon homology} of
$(\bY,Y_0)$ with coefficients in the covariant functor on the orbit category
of~$G$ with values in $\ZZ$-graded abelian groups whose component in
degree~$q$ at~$G/L$ is $E^{-q }(BL)$, which is the language used to describe
the systematization of the previous paragraph.  In degree~$-q$ the
coefficient group is the group of invertible topological phases of symmetry
type~$(H\times L,\rho \times e)$ in (spatial) dimension~$q$;
see~\eqref{eq:9}.  This is the $E^1$-page contribution of an equivariant
$p$-cell $e^p\times G/L$.  The spectral sequence converges to an associated
graded of $E^{hG}_0(\bY,Y_0)$.

The differential $d^1$~is the composition of the usual equivariant cellular
boundary map with a transfer map, the latter nontrivial in case the
stabilizer group~$L$ of a $(p-1)$-cell~$e$ is strictly larger than the
stabilizer group~$L'$ of a $p$-cell~$e'$ whose boundary rel the
$(p-2)$-skeleton maps with nontrivial degree to~$e$.  Assume $G$~is finite.
The transfer\footnote{We assume an inclusion $L'\subset L$; an inclusion into
a conjugate ~$gLg\inv $ is then composition with an automorphism.}
$E^{-d}(BL')\longrightarrow E^{-d}(BL)$ has a field-theoretic interpretation
as a map from $(d+1)$-dimensional theories of $H$-manifolds equipped with a
principal $L'$-bundle to $(d+1)$-dimensional theories of $H$-manifolds
equipped with a principal $L$-bundle.  If $M$~is a manifold (bordism)
equipped with a principal $L$-bundle $P\to M$, then a section of the
associated fiber bundle $P/L'\to M$ with fiber~$L/L'$ is equivalent to a
reduction of~$P\to M$ to structure group~$L'\subset L$.  The evaluation of
the transfer of~$F$ on~$(M,P)$ is the (tensor) product over sections of
$P/L'\to M$ of the values of the theory~$F$.  In general sections only exist
locally, so we must use the extended locality of these field theories to
compute the transfer.

We remark that there is a similar spectral sequence if $G$~is a compact Lie
group.  See~\cite{SXG} for further information about the Atiyah-Hirzebruch
spectral sequence in this context.

  \subsection{Fermionic phases on~$\EE^3$ with a half-turn}\label{subsec:1.8}

By way of illustration we now turn to the classification of phases on
$\E^{3}$ which are symmetric with respect to the involution $(x,y,z)\mapsto
(x,-y,-z)$.  The one point compactification of $\E^{3}$ is the equivariant
sphere $S^{1+2\sigma}$, where $\sigma$ is the real sign representation.  The
symmetry type~$(H,\rho)$ has $H$~ the infinite $\spin$ group, and in this
case we may identify $\mtspin$ with $\mspin$.  Applying Ansatz~\ref{thm:3}
in the form~\eqref{eq:10}, we determine the group of equivariant phases to be 
  \begin{equation}\label{eq:19}
     [\mspin,\Sigma^{2}I_{\Z}\wedge S^{1+2\sigma}]^{h\Z/2}. 
  \end{equation}
We compute this group is three ways.

   \subsubsection{First method}\label{subsubsec:1.8.1}
 
 Apply~\eqref{eq:10} with $d=3$ and $\lambda =1+2\sigma $ to compute
\eqref{eq:19} as
  \begin{equation}\label{eq:11}
     [\Sigma ^{2-2\sigma }\RP^{\infty}\wedge \MSpin\,,\,\Sigma ^5\IZ]\cong
     [\Sigma ^{2-2\sigma }\RP^{\infty}\,,\,\Sigma \,ko\langle 0\cdots 4 \rangle]. 
  \end{equation}
Here we use the Anderson-Brown-Peterson~\cite{ABP} decomposition of~$\MSpin$,
in which the leading term is~$ko$ and higher terms do not appear since
$\Sigma ^5\IZ$ has vanishing homotopy groups above dimension~5; we also use
the Anderson self-duality of~$ko$ (with a shift of~4)~\cite{HS}.  Note
$\Sigma ^{2-2\sigma }\RP^{\infty}_+$ is the Thom spectrum
$\Thom(\RP^{\infty};\underline{\RR^2}-L^{\oplus 2})$, where
$L\to\RP^{\infty}$ is the tautological real line bundle.  Let $U$~be the Thom
class of $\underline{\RR^2}-L^{\oplus 2}\to\RP^{\infty}$, $\bU$~its mod~2
reduction, and $a\in H^1(\RP^{\infty};\zt)$ the generator.  The right hand
side of~\eqref{eq:11} can be computed from the nonequivariant
Atiyah-Hirzebruch \emph{cohomology} spectral sequence 
  \begin{equation}\label{eq:12}
     E_2^{p,q}\cong H^p\bigl(\Sigma ^{2-2\sigma }\RP^{\infty};\,ko\langle 0\cdots 4
     \rangle^{q}(\pt)\bigr)\Longrightarrow   
     [\Sigma ^{2-2\sigma }\RP^{\infty}\,,\,\Sigma ^{p+q}\,ko\langle 0\cdots 4 \rangle].
  \end{equation}
The contributions in total degree~$1$ come from $E^{2,-1}_2\cong \zt\cdot \bU
a^2$ and $E^{3,-2}_2\cong \zt\cdot \bU a^3$, which are killed respectively by
$d_2(\bU)=Sq^2(\bU)$ from~$E^{0,0}_2$ and $d_2(\bU a) =Sq^2(\bU a)$
from~$E^{1,-1}_2$.  (Observe $Sq^k(\bU)=\bU w_k(\underline{\RR^2}-L^{\oplus
2})$.)  Thus the group~\eqref{eq:19} of phases vanishes in this case.

   \subsubsection{Second method}\label{subsubsec:1.8.2}
 Decompose $S^{1+2\sigma}$ into pieces of
fixed isotropy and make use of the methodology described
in~\S\ref{subsubsec:1.7.1}. 
The first step is to write 
\[
S^{1+2\sigma} = S^{1}\wedge S^{2\sigma}
\]
and
\[
[M\spin,\Sigma^{2}\IZ\wedge S^{1+2\sigma}]^{h\Z/2}=
[M\spin,\Sigma^{3}\IZ\wedge S^{2\sigma}]^{h\Z/2}.
\]
Now $S^{2\sigma}$ is the unreduced suspension of the unit sphere
$S(2\sigma)\subset \R^{2\sigma}$ so there is a cofibration sequence
of pointed $\Z/2$-spaces (or spectra)
\[
S(2\sigma)_{+}\to S^{0} \to S^{2\sigma}
\]
and an exact sequence
\begin{multline}
\label{eq:m3}
[\mspin,\Sigma^{3}\IZ\wedge S(2\sigma)_{+}]^{h\Z/2}\to 
[\mspin,\Sigma^{3}\IZ]^{h\Z/2}  \to 
[\mspin,\Sigma^{3}\IZ\wedge S^{2\sigma}]^{h\Z/2} \\ \to 
[\mspin,\Sigma^{4}\IZ\wedge S(2\sigma)_{+}]^{h\Z/2}  \to 
[\mspin,\Sigma^{4}\IZ]^{h\Z/2}.
\end{multline}
We will check that
\begin{equation}
\label{eq:m8}
[\mspin,\Sigma^{3}\IZ\wedge S(2\sigma)_{+}]^{h\Z/2}\to 
[\mspin,\Sigma^{3}\IZ]^{h\Z/2} 
\end{equation}
is an epimorphism and
\begin{equation}
\label{eq:m9}
[\mspin,\Sigma^{4}\IZ\wedge S(2\sigma)_{+}]^{h\Z/2}\to 
[\mspin,\Sigma^{4}\IZ]^{h\Z/2} 
\end{equation}
is a monomorphism, from which we deduce
\[
[\mspin,\Sigma^{3}\IZ\wedge S^{2\sigma}]^{h\Z/2} =0.
\]
 This implies that there is only one phase on $\E^{3}$---the trivial
phase---which is symmetric with respect to the involution $(x,y,z)\mapsto
(x,-y,-z)$.

To evaluate~\eqref{eq:m8} and~\eqref{eq:m9} note the orbit space
$S(2\sigma)/(Z/2)$ is just $\rp^{1}=S^{1}$ so from the Adams
isomorphism~\eqref{eq:20} we have  
\[
[\mspin, \Sigma^{k}\IZ\wedge S(2\sigma)_{+}]^{h\Z/2}
\approx [\mspin, \Sigma^{k}\IZ\wedge \rp^{1}_{+}] .
\]
The composition
\[
[\mspin,\Sigma^{3}\IZ\wedge S(2\sigma)_{+}]^{h\Z/2}\to 
[\mspin,\Sigma^{3}\IZ]^{h\Z/2} \to [\mspin,\Sigma^{3}\IZ]
\]
is the map induced by the transfer map of spectra 
  \begin{equation}\label{eq:23}
     \rp^{1}_{+}\to S^{0}. 
  \end{equation}
A choice of base point in $\rp^{1}$ gives a weak equivalence 
\begin{equation}
\label{eq:m10}
\rp^{1}_{+}\xrightarrow{\approx}{} S^{1}\vee S^{0}.
\end{equation}
Since $\rp^{1}$ is path connected, the homotopy class of this map is
independent of this choice.

The following can be proved using standard methods.

\begin{prop}
\label{thm:m8}
With respect to the decomposition~\eqref{eq:m10} the transfer map 
\[
\rp^{1}_{+}\to S^{0}
\]
has components 
\begin{align*}
\eta:S^{1}&\to S^{0} \\
2:S^{0}&\to S^{0} .
\end{align*}
in which $\eta\in\pi_{1}S^{0}=\Z/2$ is the non-trivial element.
\qed
\end{prop}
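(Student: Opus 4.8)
The plan is to identify $\tau$ with the stable Becker--Gottlieb transfer of the connected double cover $p\colon S(2\sigma)=S^1\to\rp^1$, where $\rp^1\cong S^1$ and $p$ is the squaring map $w\mapsto w^2$, and then to compute its two components separately. Under the splitting~\eqref{eq:m10} write $\tau=(f,g)$, with $g\colon S^0\to S^0$ the component on the unit summand and $f\colon S^1\to S^0$ the component on the reduced summand $\Sigma^\infty\rp^1$. Since $\pi_1S^0\cong\Z/2$ the element $f$ is either $0$ or $\eta$, so the entire content is to pin down $g$ and to decide $f$.

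The component $g$ is immediate. Model $\tau$ as the Becker--Gottlieb transfer $\rp^1_+\to S(2\sigma)_+$ followed by the augmentation $S(2\sigma)_+\to S^0$. On $\pi_0$ the transfer sends the class of a base point to the sum of the two points of its fibre, and the augmentation sends each sheet to the generator of $\pi_0S^0$; hence the composite is multiplication by the number of sheets and $g=2$.

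For $f$ I would run the explicit Pontryagin--Thom construction of the transfer. Embed the cover fibrewise over $\rp^1$ by $w\mapsto(w^2,w)\in\rp^1\times\CC$; the normal bundle is the trivial rank--two bundle $\rp^1\times\CC$, and the collapse map realizes $\tau$, after a double suspension, as $\Sigma^2\rp^1_+\to\Sigma^2 S(2\sigma)_+\xrightarrow{\text{aug}}S^2$. Restricted to the reduced summand this is a map $S^3=\Sigma^2 S^1\to S^2$, that is, an element of $\pi_3S^2\cong\Z$ whose image in $\pi_1S^0$ is $f$; its Hopf invariant equals the rotation number of the normal framing around the base circle. As $b$ traverses $\rp^1$ once the two sheets $\{w,-w\}$ are interchanged, so the fibre configuration rotates by a half--turn and the framing picks up an odd winding. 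Hence the Hopf invariant is odd, $f$ is the stable Hopf class, and $f=\eta$.

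The one substantive point---and the step I expect to be the main obstacle---is this nonvanishing of $f$: ruling out $f=0$ amounts to checking that the framing twist is odd, not merely to computing a degree. If the direct framing bookkeeping is unpalatable I would instead detect $f$ abstractly: it is the restriction to the bottom cell $\rp^1=S^1$ of the reduced transfer $\Sigma^\infty\rp^\infty\to S^0$, which is the Kahn--Priddy map and is classically $\eta$ on this cell; alternatively one detects $f\neq0$ by computing the nontrivial $Sq^2$ in the cofibre of $\tau$, or by passing to connective real $K$--theory, where $f$ hits the generator of $\pi_1\ko\cong\Z/2$ and $\pi_1S^0\to\pi_1\ko$ is an isomorphism. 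Any of these confirms $f=\eta$ while $g=2$.
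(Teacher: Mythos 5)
The paper offers no argument for this proposition at all---it is stated with a \qed{} and the remark that it ``can be proved using standard methods''---so you are filling a gap rather than paralleling a written proof. Your answer and overall structure are correct. The degree-$2$ component is exactly as you say, and the arguments in your last paragraph are each complete: the reduced transfer on $\rp^1$ is the restriction of the Kahn--Priddy map to the bottom cell, which is classically $\eta$; equivalently, composing with the unit $S^0\to\ko$ sends the transfer to $p_*(1)=1+[L]\in KO^0(\rp^1)$, whose reduced part $[L]-1$ generates $\widetilde{KO}{}^0(S^1)\cong\Z/2$, and since $\pi_1S^0\to\pi_1\ko$ is an isomorphism this forces $f=\eta$. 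Either of these is a clean, citable proof.

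Your \emph{primary} geometric argument, however, contains an imprecision you should not leave as stated. After the Pontryagin--Thom collapse the relevant framed $1$-manifold is the $(2,1)$-curve $E=\{(w^{2},w)\}$ in the solid torus $S^{1}\times\CC$ equipped with the \emph{constant} $\CC$-framing (the normal bundle is canonically the fibre direction, and that trivialization does not wind). The Hopf invariant of the resulting map $S^{3}\to S^{2}$ is the self-linking number $lk(E,E')$ of $E$ with its push-off along this framing; it is \emph{not} equal to ``the rotation number of the normal framing around the base circle.'' The correct accounting has two terms: the constant framing differs from the torus-surface framing by one full twist per traversal of $E$ (this is your half-turn per traversal of the base, doubled because the cover has degree $2$), while the surface-framed $(2,1)$-curve already has self-linking $2\cdot1=2$. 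The Hopf invariant is therefore $2\pm1$, which is odd---but only because the term you omitted happens to be even. So the parity conclusion survives, but the identity you assert is off by $2$, and as written the step ``the framing picks up an odd winding, hence the Hopf invariant is odd'' is not a proof. Either carry out this two-term bookkeeping explicitly or, better, promote the Kahn--Priddy or $\ko$ detection from fallback to the main argument.
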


Using the fact that the Atiyah-Bott-Shapiro map $\mspin\to ko$ is an
equivalence up to dimension $8$, and the isomorphisms
\begin{align*}
[\mspin,\Sigma^{k}I\Z\wedge S(2\sigma)_{+}]^{h\Z/2} &\approx 
[\mspin,\Sigma^{k}I\Z\wedge \rp^{1}_{+}] \\ 
&\approx [\mspin,\Sigma^{k+1}I\Z] \oplus [\mspin,\Sigma^{k}I\Z]  \\
[\mspin,\Sigma^{k}I\Z]^{h\Z/2} &\approx
[\mspin\wedge B\Z/2_{+},\Sigma^{k}I\Z] \\ 
&\approx
[\mspin\wedge B\Z/2,\Sigma^{k}I\Z]\oplus
[\mspin,\Sigma^{k}I\Z]
\end{align*}
one extracts the following table of values
  \begin{equation}\label{eq:25}
  \begin{gathered}
    \includegraphics[scale=1.2]{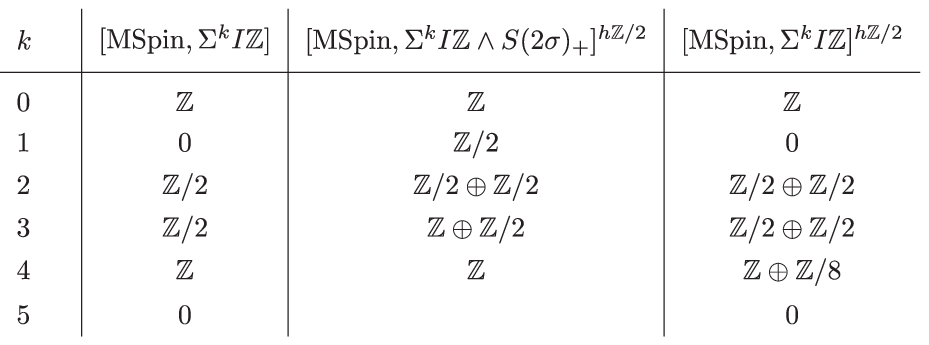} 
  \end{gathered}
  \end{equation}
as well as the fact that multiplication by the non-zero element
$\eta\in \pi_{1}S^{0}$ is the non-trivial map 
\[
[\mspin,\Sigma^{k}I\Z]\to 
[\mspin,\Sigma^{k-1}I\Z]
\]
when $k=4$ or $3$.     

By Remark~\ref{rem:5}, homomorphisms
\begin{align*}
[\mspin,\Sigma^{k}I\Z\wedge (\Z/2)_{+}]^{h\Z/2} &\to
[\mspin,\Sigma^{k}I\Z]^{h\Z/2} \\
[\mspin,\Sigma^{k}I\Z]^{h\Z/2} &\to
[\mspin\wedge (\Z/2)_{+},\Sigma^{k}I\Z]^{h\Z/2} \\
\end{align*}
induced by the map $\Z/2\to\text{pt}$ can be identified with the maps
\begin{align*}
[\mspin,\Sigma^{k}I\Z] &\to [\mspin\wedge B\Z/2_{+},\Sigma^{k}I\Z] \\
[\mspin\wedge B\Z/2_{+},\Sigma^{k}I\Z] &\to [\mspin,\Sigma^{k}I\Z] \\
\end{align*}
induced by the transfer map $B\Z/2_{+}\to S^{0}$, and the inclusion
map $S^{0}\to B\Z/2_{+}$ associated to a choice of point in $B\Z/2$.
The effect of the transfer map is given by the following table
\begin{center}
\includegraphics[scale=1.2]{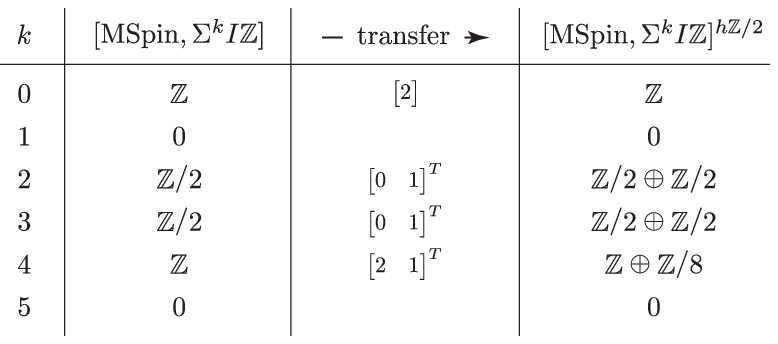}
\end{center}

With these values, and Proposition~\ref{thm:m8} the map
\[
[\mspin,\Sigma^{3}\IZ\wedge S(2\sigma)_{+}]^{h\Z/2}\to 
[\mspin,\Sigma^{3}\IZ]^{h\Z/2} 
\]
becomes 
  \begin{equation}\label{eq:24}
     \Z\oplus\Z/2\xrightarrow{\begin{bmatrix}\ast & 1\\ 1&0\end{bmatrix}}
     \Z/2\oplus \Z/2 
  \end{equation}
which is indeed an epimorphism, while
\[
[\mspin,\Sigma^{4}\IZ\wedge S(2\sigma)_{+}]^{h\Z/2}\to 
[\mspin,\Sigma^{4}\IZ]^{h\Z/2} 
\]
becomes
\[
\Z\xrightarrow{\begin{bmatrix} 
2 \\ \ast\end{bmatrix}} \Z\oplus \Z/8
\]
which is a monomorphism.

   \subsubsection{Third method}\label{subsubsec:1.8.3}
 A lecture by Mike Hermele based on~\cite{HSHH} suggested to us that the
equivariant Atiyah-Hirzebruch homology spectral sequence has a physical
interpretation in this context; here we describe how this spectral sequence
plays out to kill the relevant group.  See also~\cite{SXG} for many worked
examples using this spectral sequence.  We refer to~\S\ref{subsubsec:1.7.2}
for an exposition of the equivariant Atiyah-Hirzebruch spectral sequence.  In
the case of equivariant phases on $\E^{3}$, we use the equivariant cell
decomposition \[ S^{1+2\sigma} = S^{1}\;\cup\; \Z/2\times e^{2} \;\cup\;
\Z/2\times e^{3} \] of the one-point compactification of~$\EE^3$, the
appropriate representation sphere.  Using the table~\eqref{eq:25}, the
spectral sequence works out to be \begin{center}
\includegraphics[]{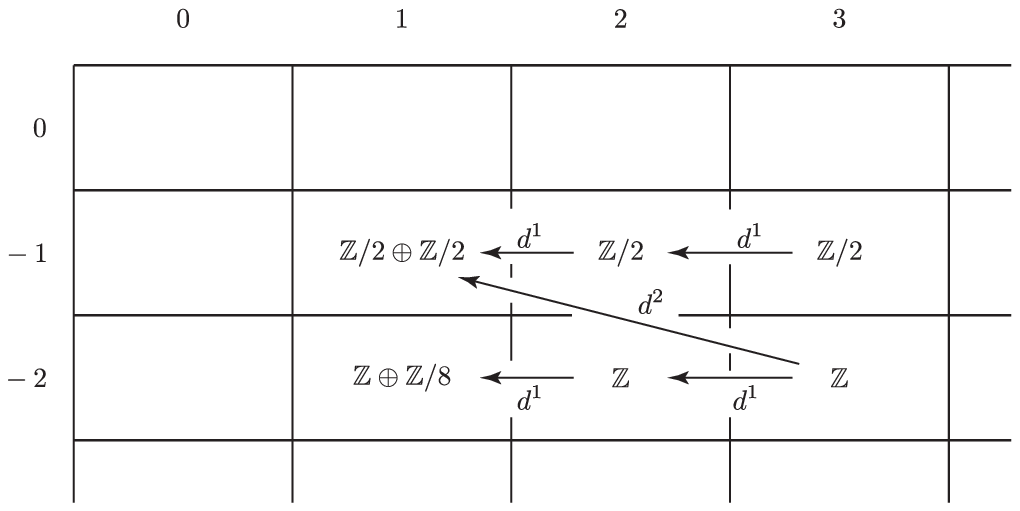}
\end{center}
The $(1,-d)$ entry is the group of invertible $(d+1)$-dimensional fermionic
phases with internal symmetry group~$\zt$ (the stabilizer group of the
1-cell), and the $(p,-d)$ entry for~$p=2,3$ is the group of invertible
$(d+1)$-dimensional fermionic phases.  The group~\eqref{eq:18} of interest is
the homology in degree~0.

  \begin{clm}[]\label{thm:11}
 The spectral sequence scorecard in degree~0 is:

 \begin{enumerate}[label=\textnormal{(\roman*)}]

 \item the differential $d^1\:E^1_{2,-1}\to E^1_{1,-1}$ hits a
$\zt$-subgroup;

 \item $d^1\:E^1_{2,-2}\to E^1_{1,-2}$ is injective; and

 \item the differential $d^2\:E^2_{3,-2}\to E^2_{1,-1}$ is onto the
remaining~$\zt$.

 \end{enumerate}
  \end{clm}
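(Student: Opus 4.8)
The plan is to read the $E^1$-page off the table~\eqref{eq:25} and then recognize the two relevant differentials as the cell-by-cell components of the maps~\eqref{eq:24} and~\eqref{eq:m9} already computed by the second method. Abbreviate $P_d=[\mspin,\Sigma^{d+2}\IZ]$ and $Q_d=[\mspin\wedge B\zt_+,\Sigma^{d+2}\IZ]$; the restriction $Q_d\to P_d$ (induced by $S^0\to B\zt_+$) is split, so $Q_d\cong\widetilde Q_d\oplus P_d$ with $\widetilde Q_d=\ker(\mathrm{res})$. By~\eqref{eq:25}, $P_1\cong\zt$, $P_2\cong\Z$, $P_3=0$, $\widetilde Q_1\cong\zt$, and $\widetilde Q_2\cong\zmod8$. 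On the total-degree-zero diagonal the entries are $E^1_{1,-1}=Q_1\cong\zt\oplus\zt$, $E^1_{2,-2}=P_2\cong\Z$, and $E^1_{3,-3}=P_3=0$, while the neighbours that matter are $E^1_{2,-1}=P_1$, $E^1_{1,-2}=Q_2$, and $E^1_{3,-2}=P_2$.

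In the equivariant cell structure $S^{1+2\sigma}=S^1\cup(\zt\times e^2)\cup(\zt\times e^3)$ the $1$-cell is the fixed circle (stabilizer~$\zt$) and the $2$- and $3$-cells are free. The first step is to match method three with method two: the Adams isomorphism~\eqref{eq:20} identifies the domain $[\mspin,\Sigma^{k}\IZ\wedge S(2\sigma)_+]^{h\zt}$ of~\eqref{eq:24} and~\eqref{eq:m9} with $[\mspin,\Sigma^{k}\IZ\wedge\rp^1_+]$, and the splitting $\rp^1_+\simeq S^1\vee S^0$ of~\eqref{eq:m10} separates the contribution of the free $3$-cell (the $S^1$, giving $P_{k-1}$) from that of the free $2$-cell (the $S^0$, giving $P_{k-2}$). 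Under this identification the free-$2$-cell component of the connecting map is the transfer $\mathrm{tr}\colon P_{-q}\to Q_{-q}$ of~\S\ref{subsubsec:1.7.2} and Remark~\ref{rem:5}, which lowers filtration by one and is therefore the differential $d^1\colon E^1_{2,q}\to E^1_{1,q}$, whereas the free-$3$-cell component lowers filtration by two and is the secondary differential $d^2\colon E^2_{3,q-1}\to E^2_{1,q}$.

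It then remains to read off the entries of the matrices. For (ii), the transfer $d^1\colon E^1_{2,-2}=P_2\to E^1_{1,-2}=Q_2$ satisfies $\mathrm{res}\circ\mathrm{tr}=2$ on $P_2\cong\Z$ (transfer--restriction for the index-two subgroup, $\zt$ acting trivially), which is injective; this is the entry $2$ in~\eqref{eq:m9}, so $d^1$ is a monomorphism. For (i), the same composite $\mathrm{res}\circ\mathrm{tr}=2$ vanishes on $P_1\cong\zt$, so $\mathrm{tr}(P_1)\subseteq\widetilde Q_1$, while the off-diagonal $1$ in~\eqref{eq:24} shows $\mathrm{tr}$ is nonzero; hence its image is all of $\widetilde Q_1\cong\zt$, a $\zt$-subgroup. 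Since the free $3$-cell orbit attaches to the free $2$-cell orbit via $1-\tau\in\Z[\zt]$, on which $\zt$ acts trivially, the differential $d^1\colon E^1_{3,-2}\to E^1_{2,-2}$ vanishes, giving $E^2_{3,-2}=P_2\cong\Z$; together with (ii) this yields $E^2_{2,-2}=0$, and together with (i) it yields $E^2_{1,-1}=Q_1/\widetilde Q_1\cong P_1\cong\zt$. The only differential meeting this group is $d^2\colon E^2_{3,-2}\to E^2_{1,-1}$, which by the previous paragraph is the free-$3$-cell column of~\eqref{eq:24}; post-composing with the quotient onto the surviving restriction summand $P_1$ kills the entry into $\widetilde Q_1$ and leaves the multiplication-by-$\eta$ map $P_2=[\mspin,\Sigma^4\IZ]\to P_1=[\mspin,\Sigma^3\IZ]$, the other $1$ in~\eqref{eq:24}, which is onto $\zt$. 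This proves (iii), and (i)--(iii) give $E^\infty_{1,-1}=E^\infty_{2,-2}=E^\infty_{3,-3}=0$, so the group~\eqref{eq:19} vanishes, in agreement with the first two methods.

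The main obstacle is part (iii): the differential $d^2$ is not a single transfer but a two-stage attaching invariant, so the work lies in showing it is exactly multiplication by~$\eta$ (equivalently, the $Sq^2$ of the first method in~\S\ref{subsubsec:1.8.1}). The cleanest rigorous route is the compatibility asserted in the second paragraph---that the components of the boundary map~\eqref{eq:24} of the cofibre sequence~\eqref{eq:m3}, sorted by cell dimension via~\eqref{eq:m10}, are precisely the successive differentials $d^1$ and $d^2$ of the equivariant Atiyah--Hirzebruch spectral sequence. Should one prefer to avoid establishing that identification directly, the vanishing of~\eqref{eq:19} proved in~\S\ref{subsubsec:1.8.1}--\S\ref{subsubsec:1.8.2} forces the sole available differential $d^2$ to be onto, which suffices for (iii).
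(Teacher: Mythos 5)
Your argument is correct in substance and reaches all three conclusions, but it takes a genuinely different route from the paper in part (i), and the difference is worth spelling out. The paper identifies the four elements of $E^1_{1,-1}\cong \zt\oplus\zt$ by their partition functions $1$, $(-1)^{\Arf(q_X)}$, $(-1)^{\Arf(q_X+Q)}$, $(-1)^{q_X(Q)}$ and computes the transfer of the Arf theory directly in the Morita 2-category of complex superalgebras with $\zt$-action: $\Cliff_1^{\CC}$ with trivial involution transfers to $\Cliff_1^{\CC}\otimes\Cliff_1^{\CC}$ with the swap involution, which is Morita equivalent to $\CC$ with the nontrivial involution, i.e.\ to the theory $(-1)^{q_X(Q)}$. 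That computation is self-contained and tells you exactly which theory the differential produces. You instead argue formally: $\operatorname{res}\circ\operatorname{tr}=2$ annihilates $P_1\cong\zt$, so the image lands in $\widetilde Q_1$, and nonvanishing is read off from the matrix~\eqref{eq:24}. This is logically fine, but note that the entry of~\eqref{eq:24} you invoke \emph{is} the assertion that the transfer $P_1\to\widetilde Q_1$ is nonzero, which in \S\ref{subsubsec:1.8.2} is taken from a stated-but-unproved table; the paper's Morita-category argument is in effect the proof of that table entry, so your version of (i) outsources the one nontrivial input rather than establishing it, and makes the third method strictly dependent on the second. For (ii), your observation that $\operatorname{res}\circ\operatorname{tr}=2$ is injective on $P_2\cong\ZZ$ is clean and correct. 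For (iii), both you and the paper ultimately defer to~\eqref{eq:24}; you are right to flag that identifying the cell-by-cell components of the boundary map of the cofibre sequence with the successive differentials $d^1$ and $d^2$ is the step needing justification, and your fallback---using the vanishing of~\eqref{eq:19} established by the first two methods to force $d^2$ to be onto---is valid, again at the cost of independence from the other methods.
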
 

  \begin{proof}
 The group~$E^1_{1,-1}$ of invertible topological
phases of spin $2$-manifolds~$X$ equipped with a double cover $Q\to X$ may be
described in terms of partition functions.  Recall that a spin structure on a
closed 2-manifold~$X$ gives a quadratic refinement~$q_X$ of the intersection
pairing on~$H^1(X;\zt)$, and $q_X$~ has an Arf invariant~$\Arf(q_X)\in \zt$.
The equivalence class of a double cover $Q\to X$ lives in $H^1(X;\zt)$.  The
four possible partition functions are $1$, $(-1)^{\Arf(q_X)}$,
$(-1)^{\Arf(q_X+Q)}$, and $(-1)^{q_X(Q)}$.  A more precise version of~(i) is:
the first differential $d^1\:E^1_{2,-1}\to E^1_{1,-1}$ maps the second of
these, which is a theory on spin manifolds without a double cover, onto the
last of these.\footnote{which Mike Hermele calls a ``fermionic AKLT state''}
We can compute that from the transfer as follows.  Let the target 2-groupoid
for these extended field theories be the Morita category of central simple
complex superalgebras equipped with a $\zt$-action.  The four theories
evaluate on a point respectively to $\CC$ with trivial involution, the
Clifford algebra~$A=\Cliff_1^{\CC}$ with trivial involution, the algebra~$A$
with nontrivial involution, and $\CC$ with nontrivial involution.  The
transfer maps the second of these to~$A\otimes A$ with the involution
exchanging the factors, and this is Morita equivalent to~$\CC$ with
nontrivial involution.  This proves~(i).  Claim~(ii) is straightforward: the
differential $d^1\:E^1_{2,-2}\to E^1_{1,-2}$ does not involve a transfer, so
reduces to the cellular differential.  The differential in~(iii) is induced
by the transfer~\eqref{eq:23}, and was worked out in~\eqref{eq:24}.
  \end{proof}

 \bibliographystyle{hyperamsalpha} 
\providecommand{\bysame}{\leavevmode\hbox to3em{\hrulefill}\thinspace}
\providecommand{\MR}{\relax\ifhmode\unskip\space\fi MR }
\providecommand{\MRhref}[2]{%
  \href{http://www.ams.org/mathscinet-getitem?mr=#1}{#2}
}
\providecommand{\href}[2]{#2}

  \end{document}